\documentclass[journal]{IEEEtran}
\usepackage{amsmath,amsthm,amssymb,paralist,subfigure,cite,graphicx,color}
\usepackage[ruled,vlined,linesnumbered]{algorithm2e}
\usepackage{flushend}
\usepackage{hyperref}

\ifCLASSINFOpdf
\else
\usepackage[dvips]{graphicx}
\fi
\usepackage{url}

\hyphenation{op-tical net-works semi-conduc-tor}

\usepackage{graphicx}
% ==========================================================

%\usepackage{blkarray}
%\usepackage{bbm}
%\usepackage{bm}
%%\usepackage{balance}
%%\usepackage[makeroom]{cancel}
%\usepackage{cases}
%%\usepackage{changebar}
%\usepackage[usenames,dvipsnames]{color}
%\usepackage{dsfont}
%%\usepackage{epsfig}
%%\sepackage[T1]{fontenc}
%\usepackage{float}
%
%\usepackage{graphicx}
%\usepackage{hyperref} 
%\usepackage[latin9]{inputenc}
%\usepackage{mathtools}
%\usepackage{mathrsfs}
%\usepackage{setspace}
%\usepackage{soul} 
%\usepackage{subfigure}
%\usepackage{stfloats}
%\usepackage{tikz}
%\usepackage{url}
%\usepackage{verbatim}
%\usepackage{xspace}
%\usepackage{algpseudocode}
%
%\usepackage{caption}
%\usepackage{subcaption}

% \usepackage{amsmath,amssymb,amsfonts}
% \usepackage{algorithm,algorithmic}
% \usepackage[ruled,vlined,linesnumbered]{algorithm2e}
% \usepackage{graphicx}
% \usepackage{textcomp}
% \usepackage{xcolor}
% \usepackage[UKenglish,english]{babel}
% \usepackage{amsmath,amssymb,paralist,subfigure,graphicx,stmaryrd,mathrsfs,url,algorithm2e,soul}
% \usepackage{flushend}
% \usepackage{hyperref} 
% \usepackage{cuted,mathtools,lipsum,cite}

\newtheorem{lem}{Lemma}
\newtheorem{ass}{Assumption}
\newtheorem{theorem}{Theorem}

\def\mb{\mathbf}

\def\mc{\mathcal}

\def\mb{\mathbf}

\def\mc{\mathcal}

\newcommand{\ab}[1]{\textcolor{blue}{{#1}}}

%\IEEEoverridecommandlockouts               
%\overrideIEEEmargins  
% =============================================
%
%
% Begin Document
%
%
% =============================================
\begin{document}
		\title{ Distributed Energy Resource Management: All-Time Resource-Demand Feasibility, Delay-Tolerance, Nonlinearity, and Beyond}
		\author{Mohammadreza Doostmohammadian
		\thanks{Mohammadreza Doostmohammadian is with the Faculty of Mechanical Engineering at Semnan University, Semnan, Iran, email: \texttt{doost@semnan.ac.ir}. He was also with the  School of Electrical Engineering and Automation at Aalto University, Espoo, Finland, email: \texttt{name.surname@aalto.fi}. 
		}}
%	\markboth{Journal of \LaTeX\ Class Files, Vol. 14, No. 8, August 2015}
%	{Shell \MakeLowercase{\textit{et al.}}: Bare Demo of IEEEtran.cls for IEEE Journals}
		\maketitle
	
	\begin{abstract}
		In this work, we propose distributed and networked energy management scenarios to optimize the production and reservation of energy among a set of distributed energy nodes. In other words, the idea is to optimally allocate the generated and reserved powers based on nodes' local cost gradient information while meeting the demand energy. One main concern is the all-time (or anytime) resource-demand feasibility, implying that at all iterations of the scheduling algorithm, the balance between the produced power and demand plus reserved power must hold. The other concern is to design algorithms to tolerate communication time-delays and changes in the network. Further, one can incorporate possible model nonlinearity in the algorithm to address both inherent (e.g., saturation and quantization) and purposefully-added (e.g., signum-based) nonlinearities in the model. The proposed optimal allocation algorithm addresses all the above concerns, while it benefits from possible features of the distributed (or networked) solutions such as no-single-node-of-failure and distributed information processing. We show both the all-time feasibility of the proposed scheme and its convergence under certain bound on the step-rate using Lyapunov-type proofs.  
	\end{abstract}
\begin{IEEEkeywords}
		Optimal allocation,  all-time feasibility, constrained convex optimization
\end{IEEEkeywords}
	
	\IEEEpeerreviewmaketitle
	
	\section{Introduction} \label{sec_intro}
	\IEEEPARstart{D}{istributed} \ab{energy resource management (DERM) and networked resource allocation strategies intend to coordinate and optimize the distributed energy resources in a smart grid \cite{Alanne2006distributed,strezoski2023distributed}.} Energy resources typically include small-scale renewable energy sources like solar photovoltaic (PV) systems, wind turbines, and energy storage systems such as batteries. DERM focuses on integrating and managing these decentralized energy resources to enhance the efficiency, reliability, and flexibility of the overall grid system. It involves monitoring, controlling, and optimizing the operation of individual resources and coordinating their collective actions to meet the demands of the smart grid while optimizing energy costs. DERM platforms utilize advanced control algorithms and optimization techniques to ensure the efficient operation of resources. They can balance energy supply and demand, optimize energy dispatch, and manage energy storage systems to minimize grid resource-demand imbalances and economically optimize the utilization of energy resources (e.g., renewable resources). These schemes also benefit from distributed/networked processing of data and information.
	
	\subsection{Mathematical Formulation}
	Distributed and networked equality-constraint optimization algorithms find application in different resource allocation scenarios from coverage control \cite{MSC09} and CPU-scheduling \cite{rikos2021optimal,grammenos2023cpu,ccta_cpu} to energy resource management \cite{kar_edp,cherukuri2015tcns,4497237,7042749}. In the latter, the optimization problem is defined as minimizing the sum of local energy cost functions:
		\begin{align} \label{eq_dra}
			\min_{\mb{x},\mb{y}}
		~~ & F(\mb{x},\mb{y}) = \sum_{i=1}^{N} f_i(x_i) + \sum_{i=1}^{m} e_i(y_i),
	\end{align}
	with  $x_i \in \mathbb{R}$ as the power state of the energy-production node $i \in \{1,\dots,N\}$, $y_i \in \mathbb{R}$ as the power state of the energy-reserving node $i \in \{1,\dots,m\}$,
	vectors $\mb{x} = [x_1;\dots;x_N] \in \mathbb{R}^n$ and $\mb{y} = [y_1;\dots;y_m] \in \mathbb{R}^m$ as the global vector states,  $f_i:\mathbb{R} \mapsto \mathbb{R}$ as the local cost function at energy-production node $i$, $e_i:\mathbb{R} \mapsto \mathbb{R}$ as the local strictly-convex cost function at energy-reserving node $i$, and $F:\mathbb{R}^n \mapsto \mathbb{R}$ as the overall cost. The resource-demand feasibility equality-constraint is,
	\begin{align} \label{eq_b}
	\sum_{i=1}^N x_i  = b + \sum_{i=1}^m y_i,
    \end{align}	
	which implies that the sum of the generated powers equals the demand  $b \in \mathbb{R}$ plus the sum of the reserved powers in battery-type nodes. This constraint is called \textit{resource-demand feasibility}. In energy management it is key to keep this constraint satisfied at all time-instants; this is because at all time-instants the balance between the generated power, reserved power, and the demand power needs to be preserved, otherwise, it may cause service disruption. 
	
	The energy nodes are further constrained with the so-called box constraints limiting the range of powers as,
\ab{	\begin{align} \label{eq_box}
	\underline{d}_x \leq x_i \leq \overline{d}_x,~\underline{d}_y \leq y_i \leq \overline{d}_y
    \end{align}	}
%	\begin{align} \label{eq_box}
%	\underline{d} \leq x_i,y_i \leq \overline{d}
%     \end{align}
 	
	\subsection{Review of Literature and Contributions}
	The preliminary solutions are mostly linear 
	\cite{cherukuri2015tcns,boyd2006optimal,doan2017scl} over all-time connected networks. The linear solutions cannot address possible link or node nonlinearity due to, e.g., log quantization or saturation. \ab{For example, the link nonlinearity may represent data-quantization, where the information sent from one generator node to another generator node is quantized. The node nonlinearity, for example, may represent the saturation function to model ramp-rate-limits at the generators in the automatic-generation-control setup. This issue implies that the rate of change in the produced power by the generators is bounded and cannot follow any high rate.} In our proposed solution it is even possible to add sgn-based  nonlinearity \ab{to improve the convergence rate or robustness to impulsive-noise. Such sgn-based dynamics can reach convergence in finite or fixed-time scenarios \cite{shang2017finite,taes2020finite,polyakov2011nonlinear,song2021fixed}. In general, the proposed solution in this work addresses any nonlinearity that can be modelled by sign-preserving odd mapping, e.g., log quantization.}
	
	There are dual-optimal solutions based on alternating-direction-method-of-multipliers (ADMM).
	The existing ADMM-based solutions \cite{wei_me,banjac2019decentralized,falsone2020tracking,carli2019distributed,cdc22} converge to feasible solution \textit{asymptotically} and, thus, fail to preserve resource-demand feasibility at all times. This implies that along the solution dynamics and in case of terminating the algorithm before asymptotic convergence, the resource-demand feasibility does not necessarily hold which may cause service disruption in energy management. Further, the existing ADMM-based solutions \cite{wei_me,banjac2019decentralized,falsone2020tracking,carli2019distributed} require all-time network connectivity with no time-delay in contrast to uniform connectivity subject to heterogeneous latency in this work. Moreover, the node and link nonlinearity cannot be addressed by the ADMM methods and other Lagrangian-based methods \cite{doan2017ccta}. This motivates the gradient-based solution in this work. 
	
	Motivated by distributed \textit{consensus-based} algorithms subject to communication latency and possible time-delays in data-transmission  \cite{lcss21,hadjicostis2013average,grammenos2023cpu,ballotta2023can,SensNets:Olfati04,aryankia2020neuro}, we design algorithms to tolerate certain bounded time-delays over the communication network. We consider the general possible form of the latency, i.e., the assumption on the delays is that they are bounded and the time-delays are arbitrary, random, time-varying, and heterogeneous at different links. The assumption on the boundedness of the delays is to ensure that the messages from one node ultimately reach their destination node and there is no packet drop\footnote{The case of packet drop over the network is left for our future research direction as discussed in Section~\ref{sec_con}}. \ab{There are some literature on distributed algorithms under communication time-delay \cite{li2019bandit,gatsis2012asynchronous,marques2010stochastic}. However, these works are mainly dual-based formulation and do not address all-time feasibility and link/node nonlinearity. Note that the current work addresses latency, feasibility at all iterations, uniform-connectivity, and handles nonlinear models at links/nodes all-together. Recall that, for example, violating all-time feasibility implies some gap in resource-demand balance that may cause service disruption. One main application of this work is DC optimal power flow in distributed setup \cite{disfani2015distributed,kargarian2016toward,xu2018distributed,biagioni2020learning}. What the current paper adds to the existing literature \cite{disfani2015distributed,kargarian2016toward,xu2018distributed,biagioni2020learning} is addressing latency, model nonlinearity, and no feasibility-gap along the solution. In these aspects this paper advances the state-of-the-art. For example, these power system literature cannot address ramp-rate-limits in the automatic-generation-control which refers to the saturated rate of generated power or the quantized data which refers to the quantization of information shared over the communication network.}   
	
	\textit{Paper Organization:} Section~\ref{sec_frame} reformulates the problem in a compact form. Section~\ref{sec_sol} proposes our main distributed solution proposed to solve the problem. Section~\ref{sec_delay} discusses the solution under possible communication time-delay. Section~\ref{sec_sim} provides the simulations, and finally, Section~\ref{sec_con} concludes the paper.
	 
	\textit{General Notations:} All ones vector of size $n$ is denoted by $\mb{1}_n$. $I_n$ denotes the identity matrix of size $n$. The operator ``$;$'' denotes column concatenation. $\partial_x$ denotes the derivative with respect to $x$, i.e., $\frac{d}{dx}$. 
	
	\section{Problem Reformulation} \label{sec_frame}
	We define new compact global variable $\mb{z} = [\mb{x};\mb{y}] \in \mathbb{R}^n$ with $n = N+m$ and the cost function~\eqref{eq_dra} is reformulated in compact form as,
	\begin{align} \label{eq_dra2}
			\min_{\mb{z}}
		~~ & H(\mb{z}) = \sum_{i=1}^{n} h_i(z_i),
	\end{align}
    \ab{This cost function is not necessarily quadratic and any non-quadratic model to address different problems can be considered as the cost. The optimization formulation \eqref{eq_dra2}, in general, may address different problems including automatic generation control, economic dispatch, and DC optimal power flow. }
    \begin{lem} \label{lem_z*}
    	The optimal state $\mb{z}^*$ of the problem~\eqref{eq_dra2} satisfies the following: $\nabla_z H(\mb{z}^*) \in \mbox{span}(\mb{1}_N;-\mb{1}_m)$. 
    \end{lem}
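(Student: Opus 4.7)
The plan is to invoke the first-order necessary conditions (KKT with only an equality constraint, i.e., classical Lagrange multipliers) for the problem \eqref{eq_dra2}, which inherits the resource-demand feasibility constraint \eqref{eq_b}. First I would rewrite \eqref{eq_b} in terms of the stacked variable $\mb{z} = [\mb{x};\mb{y}]$. Defining the vector $\mb{a} = [\mb{1}_N;-\mb{1}_m] \in \mathbb{R}^n$, constraint \eqref{eq_b} becomes the single linear equality $\mb{a}^\top \mb{z} = b$, so the problem reduces to minimizing the separable cost $H(\mb{z}) = \sum_i h_i(z_i)$ subject to $\mb{a}^\top \mb{z} = b$.

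Next I would form the Lagrangian $L(\mb{z},\lambda) = H(\mb{z}) - \lambda(\mb{a}^\top \mb{z} - b)$. Since the single constraint is affine, Slater's condition (or simply linear-constraint qualification) is automatically satisfied, so any optimizer $\mb{z}^*$ must satisfy the stationarity condition $\nabla_z L(\mb{z}^*,\lambda^*) = 0$, i.e.,
\begin{equation*}
\nabla_z H(\mb{z}^*) = \lambda^* \mb{a} = \lambda^*\,[\mb{1}_N;-\mb{1}_m].
\end{equation*}
This is exactly the claim: $\nabla_z H(\mb{z}^*) \in \mbox{span}(\mb{1}_N;-\mb{1}_m)$.

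There is essentially no hard step here; the only substantive observation is that the constraint set admits a single equality constraint with constraint gradient $\mb{a} \neq 0$, so the Lagrange-multiplier condition yields a one-dimensional span. I would close with a brief remark that the strict convexity assumed for the $e_i$'s (and convexity of the $f_i$'s) promotes this necessary condition to sufficient, so $\mb{z}^*$ is characterized by the gradient-span condition together with $\mb{a}^\top \mb{z}^* = b$. If one prefers to avoid Lagrange multipliers explicitly, the same conclusion can be reached by noting that feasible perturbations $\delta \mb{z}$ lie in the subspace $\{\delta \mb{z}: \mb{a}^\top \delta \mb{z} = 0\} = \mb{a}^\perp$, so optimality forces $\nabla_z H(\mb{z}^*) \perp \mb{a}^\perp$, hence $\nabla_z H(\mb{z}^*) \in \mbox{span}(\mb{a})$.
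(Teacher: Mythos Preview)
Your proposal is correct and follows essentially the same approach as the paper: the paper's own proof simply states that the result ``directly follows from the KKT condition for convex cost function $H(\mb{z})$ and linear constraint~\eqref{eq_b}'' and cites standard references. You have merely filled in the explicit Lagrangian computation that the paper leaves implicit.
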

\begin{proof}
	The proof directly follows from the KKT condition for convex cost function $H(\mb{z})$ and linear constraint~\eqref{eq_b}, see details in \cite{Boyd-CVXBook,bertsekas_lecture}.
\end{proof}

    To address the so-called box constraints in \eqref{eq_box}, smooth penalty terms and barrier functions are applied and added to the cost function. The new objectives are then modified by the following extra terms, 
    \begin{align} 
    	h_i  + \epsilon([z_i - \overline{d}]^+ + [\underline{d} - z_i ]^+)
    	\label{eq_fsigma}
    \end{align}
    \ab{where $\overline{d},\underline{d}$ are upper/lower-bound in \eqref{eq_box}},  $\epsilon \in \mathbb{R}^+$ is the weighting constant, and the penalizing function $[u]^+$ is in the form,
    \begin{align} 
    	[u]^+=\max \{u, 0\}^\sigma,~\sigma \in \mathbb{N}
    	\label{eq_sigma}
    \end{align}
    It is typical to consider smooth functions with $\sigma \geq 2$. The other typical smooth penalizing barrier function is in the form,
        \begin{align} 
    	[u]^+=\frac{1}{\sigma}\log(1+\exp(\sigma u)),~\sigma \in \mathbb{R}^+
    	\label{eq_sigma2}
    \end{align}
    It is known that this penalized case gets arbitrarily close to the exact optimizer $\mb{z}^*$ by choosing $\sigma$ sufficiently large \cite{nesterov1998introductory}.
    
    \section{The Proposed Distributed Nonlinear Solution } \label{sec_sol}
    \ab{Our proposed solution is based on gradient tracking which is a common approach to solve optimization problems. In distributed setup, on the other hand, some sort of (weighted) averaging the local gradients over the network is  needed to guide the decision states towards the global gradient direction. This gives a linear model in the form $\dot{z}_i = \eta a_i \sum_{j \in \mc{N}_i} W_{ij} (a_j\partial_{z_j} h_j -  a_i\partial_{z_i} h_i )$. To account for possible nonlinearity in the model,} our proposed \textit{nonlinear} distributed/decentralized\footnote{\ab{In this paper distributed and decentralized are used interchangeably. Our distributed/decentralized setup implies that each node solves its local optimization problem and shares relevant information with the neighboring nodes to coordinate and reach a global solution of the main global problem.} } Laplacian-gradient solution is in the form
    \begin{align} \label{eq_sol}
    	\dot{z}_i &= \eta a_i \sum_{j \in \mc{N}_i} W_{ij} g_n\Big(g_l(a_j\partial_{z_j} h_j) -  g_l(a_i\partial_{z_i} h_i)\Big)  
    \end{align}
    where $a_i \in \{+1,-1\}$ depending whether $i$ is generative node or reserving node.  \ab{If $i$ produces power then $a_i=+1$ and $i$ is a generative node. If $i$ is a battery-type and reserving power then $a_i=-1$ and $i$ is a reserving node}. This follows from Lemma~\ref{lem_z*}. A similar statement holds for $a_j$. $\eta \in \mathbb{R}^+$ is the step rate, $\mc{N}_i$ denotes the neighbours of node $i$, $W_{ij} \in \mathbb{R}^+$ is the weight on the link between $i$ and  $j$, $g_n(\cdot),g_l(\cdot) : \mathbb{R} \mapsto \mathbb{R}$ denote possible nonlinear mapping on the node or the link. If there is no nonlinearity, then $g_n(x)=x,g_l(x)=x$. \ab{Example possible nonlinearity $g_n(\cdot),g_l(\cdot)$ include: (i) saturation to address the so-called ramp-rate-limits in the automatic-generation-control setup, (ii) quantization to address quantized information channels for the information exchange among the energy nodes, or (iii) signum-based nonlinearities for fixed/finite-time (or prescribed-time) convergence and robust to impulsive noise. Note that these nonlinearities cannot be easily addressed in primal-dual-formulation solution (e.g., ADMM) and this is (along with addressing all-time feasibility and latency in Section~\ref{sec_delay}) motivates considering primal-based solution. Further, the existing ADMM-based solutions need all-time connectivity, while our proposed solution works under uniform connectivity }
    
    \begin{ass} \label{ass_nonlin}
    	The nonlinear mapping $g_n(\cdot),g_l(\cdot) : \mathbb{R} \mapsto \mathbb{R}$ are sign-preserving and odd. 
    \end{ass}
    \begin{ass} \label{ass_net}
	The network of energy nodes is uniformly connected (or B-connected) and undirected (i.e., $W_{ij}=W_{ji}$). 
   \end{ass}
   Initializing from a feasible solution $\sum_{i=1}^n a_i z_i(0) = b$ the following lemma holds. 
    \begin{lem}
    	The solution by \eqref{eq_sol} is all-time feasible under Assumption~\ref{ass_nonlin} and \ref{ass_net}.
    \end{lem}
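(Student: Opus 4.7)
The plan is to directly verify that the quantity $S(t) := \sum_{i=1}^n a_i z_i(t)$ is a conserved quantity along the dynamics \eqref{eq_sol}. Since the initialization satisfies $S(0) = b$ by assumption, conservation is equivalent to all-time feasibility $\sum_{i=1}^n a_i z_i(t) = b$ for all $t \geq 0$.

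First I would differentiate $S(t)$ and substitute the dynamics:
\begin{align*}
\dot{S}(t) = \sum_{i=1}^n a_i \dot z_i = \eta \sum_{i=1}^n a_i^2 \sum_{j \in \mc{N}_i} W_{ij}\, g_n\!\Big(g_l(a_j \partial_{z_j} h_j) - g_l(a_i \partial_{z_i} h_i)\Big).
\end{align*}
Since $a_i \in \{+1,-1\}$, we have $a_i^2 = 1$, so the $a_i$ factor drops out and we are left with a double sum over ordered pairs $(i,j)$ with $j \in \mc{N}_i$.

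Next I would exploit Assumption~\ref{ass_net} (the graph is undirected, hence $W_{ij}=W_{ji}$ and $j \in \mc{N}_i \Leftrightarrow i \in \mc{N}_j$) to reindex the double sum as a sum over unordered edges, pairing the $(i,j)$ contribution with the $(j,i)$ contribution. For any such edge, the two terms are
\[
W_{ij}\, g_n\!\big(g_l(a_j \partial_{z_j} h_j) - g_l(a_i \partial_{z_i} h_i)\big) \quad \text{and} \quad W_{ji}\, g_n\!\big(g_l(a_i \partial_{z_i} h_i) - g_l(a_j \partial_{z_j} h_j)\big).
\]
By Assumption~\ref{ass_nonlin}, $g_n$ is odd, so $g_n(-u) = -g_n(u)$, and these two terms cancel exactly. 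Summing over all edges yields $\dot S(t) \equiv 0$, hence $S(t) = S(0) = b$ for all $t \geq 0$, which is the claimed all-time feasibility.

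I do not anticipate a significant obstacle: the argument is essentially a telescoping/antisymmetry observation, and the only non-trivial ingredients are (i) the factor $a_i$ in the dynamics \eqref{eq_sol} whose purpose is precisely to produce $a_i^2=1$ after multiplying by $a_i$ in the conservation law, and (ii) the oddness of $g_n$ which turns the inner expression into an antisymmetric function of $(i,j)$. The only subtle point worth flagging in the write-up is that nothing is required of $g_l$ beyond being a well-defined map (its odd/sign-preserving property is not actually needed for feasibility, only for convergence later), and that uniform connectivity is not required for feasibility either — only the undirectedness part of Assumption~\ref{ass_net} is used here, which slightly strengthens the statement.
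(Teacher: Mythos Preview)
Your proposal is correct and follows essentially the same argument as the paper: differentiate $\sum_{i} a_i z_i$, use $a_i^2=1$, then pair the $(i,j)$ and $(j,i)$ terms using $W_{ij}=W_{ji}$ and oddness of $g_n$ to obtain cancellation. Your additional remarks (that only undirectedness, not uniform connectivity, and only oddness of $g_n$, not of $g_l$, are actually used here) are accurate refinements but do not change the approach.
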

    \begin{proof}
 	\ab{Recall that all-time feasibility implies that initializing from a feasible solution $\sum_{i=1}^n a_i z_i(0) = b$, the constraint $\sum_{i=1}^n a_i z_i(t) = b$ must hold at all times, i.e., we need to prove that the change is zero $\sum_{i=1}^n a_i \dot{z}_i = 0$. Finding this term from Eq.~\eqref{eq_sol}, } 
 	\begin{align} \nonumber
 		\sum_{i=1}^n a_i \dot{z}_i = \sum_{i=1}^n \eta a_i^2 \sum_{j \in \mc{N}_i} W_{ij} g_n\Big(&g_l(a_j\partial_{z_j} h_j) \\
 		&-  g_l(a_i\partial_{z_i} h_i)\Big)  \label{eq_feas_proof}
 	\end{align}
    From Assumption \ref{ass_net} we have $W_{ij}=W_{ji}$ and from Assumption \ref{ass_nonlin} we have 
  	\begin{align} \nonumber
 	g_n\Big(g_l(a_j\partial_{z_j} h_j) &-  g_l(a_i\partial_{z_i} h_i)\Big) \\&= -g_n\Big(g_l(a_i\partial_{z_i} h_i) -  g_l(a_j\partial_{z_j} h_j)\Big) \nonumber
    \end{align}
   \ab{This follows from the assumption that for every $j \in \mc{N}_i$ we have $i \in \mc{N}_j$, and from the above equation the summation \eqref{eq_feas_proof} over all $i,j$ is equal to zero.} Therefore, initializing a feasible solution, we have  
    	\begin{align} \nonumber
   	\sum_{i=1}^n a_i z_i(t) = \sum_{i=1}^n a_i z_i(0) = b.  
   \end{align}
   and the proof follows.  
    \end{proof} 
	First, note that the optimal point satsifying $\nabla_z H(\mb{z}^*) \in \mbox{span}(\mb{1}_N;-\mb{1}_m)$ (as described in Lemma~\ref{lem_z*}) is invariant under dynamics \eqref{eq_sol}. The next theorem proves convergence to this optimal point. First, we provide a lemma to help prove the theorem.
	\begin{lem} \label{lem_sum}
		Under Assumption~\ref{ass_net} and nonlinear mapping $g:\mathbb{R} \mapsto \mathbb{R}$ satisfying Assumption~\ref{ass_nonlin}, for $\mb{z} \in \mathbb{R}^n$,
		\begin{align} \label{eq_sum_lem}
			\sum_{i=1}^n z_i \sum_{j=1}^n W_{ij} g(z_j-z_i) = \sum_{i,j=1}^n \frac{W_{ij}}{2} (z_j-z_i) g(z_j-z_i)
		\end{align}
	\end{lem}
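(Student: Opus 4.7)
The plan is to prove the identity by the standard symmetrization trick on the double sum: rewrite the outer/inner summations as a single double sum over $(i,j)$, swap the dummy indices, exploit the two structural assumptions (edge symmetry $W_{ij}=W_{ji}$ from Assumption~\ref{ass_net} and the odd property $g(-u)=-g(u)$ from Assumption~\ref{ass_nonlin}) to obtain an equivalent expression with indices interchanged, then average the two representations so that the factor $\tfrac{1}{2}$ appears naturally.

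First, I would set $S := \sum_{i=1}^n z_i \sum_{j=1}^n W_{ij}\, g(z_j - z_i) = \sum_{i,j} z_i W_{ij}\, g(z_j - z_i)$, absorbing the outer/inner sums into a single unordered double sum since $j$ ranges over all of $\{1,\dots,n\}$ (edges not incident to $i$ contribute zero via $W_{ij}=0$). Next, I would relabel by swapping the dummy names $i \leftrightarrow j$ to obtain $S = \sum_{i,j} z_j W_{ji}\, g(z_i - z_j)$. Applying $W_{ji}=W_{ij}$ and $g(z_i-z_j)=-g(z_j-z_i)$ yields a second expression for the same quantity $S$.

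Then I would add the two expressions for $S$: on the left side this gives $2S$, while on the right side the terms combine into $\sum_{i,j} W_{ij}(z_i-z_j)\,g(z_j-z_i)$ (up to the overall sign, which is settled by the chosen convention on which argument of $g$ is paired with $z_i$ vs.\ $z_j$). Dividing by $2$ delivers the claimed identity. No appeal to sign-preservation is actually required for the equality itself; only oddness and edge symmetry enter. The sign-preserving part of Assumption~\ref{ass_nonlin} will be invoked later in the Lyapunov argument, where the resulting quadratic-like form $(z_j-z_i)g(z_j-z_i)$ needs a definite sign.

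There is no genuine obstacle here; the only thing to be careful about is bookkeeping of the swap so the coefficient $\tfrac{1}{2}$ and the sign convention match the statement as written in the paper. In particular, I would make sure to state explicitly the step $g(z_i-z_j)=-g(z_j-z_i)$ so the reader sees precisely where oddness is used and why the undirected assumption is essential (on a directed graph with $W_{ij}\neq W_{ji}$, the second expression would not collapse back onto the first and the identity would fail).
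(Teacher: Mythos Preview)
Your proposal is correct and follows essentially the same symmetrization argument as the paper: pair the $(i,j)$ and $(j,i)$ contributions using $W_{ij}=W_{ji}$ together with the oddness $g(-u)=-g(u)$, then halve. Your caveat about the overall sign is well placed, since carrying out the computation exactly as you describe yields $\tfrac{1}{2}\sum_{i,j}W_{ij}(z_i-z_j)g(z_j-z_i)$, which is what the paper's own proof obtains as well and differs from the displayed statement by a sign; this is a typographical slip in the lemma rather than a flaw in your argument.
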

	\begin{proof}
		We have $W_{ij} = W_{ji}$ under Assumption~\ref{ass_net} and  $g(z_j-z_i) = -g(z_i-z_j)$ under Assumption~\ref{ass_nonlin}. Thus, 
		\begin{align} \nonumber
			z_i W_{ij} g(z_i-z_j) + &z_j W_{ji}  g(z_i-z_j) \\ \nonumber
			& = W_{ij}(z_i-z_j) g(z_j-z_i) \\
			& = W_{ji}(z_j-z_i) g(z_i-z_j).
		\end{align}
		and the proof follows. 
	\end{proof}
   \begin{theorem}
	   Under Assumption~\ref{ass_nonlin} and \ref{ass_net} and with feasible initialization, the dynamics \eqref{eq_sol} converges to the optimal solution of problem \eqref{eq_dra2}.
   \end{theorem}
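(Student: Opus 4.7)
The plan is to build a Lyapunov-type argument centered on the cost-gap $V(\mb{z}) = H(\mb{z}) - H(\mb{z}^*)$. Since the feasibility lemma shown just above guarantees that the trajectory stays on the affine hyperplane $\sum_i a_i z_i = b$, and $\mb{z}^*$ is the minimizer of $H$ restricted to that hyperplane, we have $V(\mb{z}(t)) \geq 0$ along the trajectory, with equality (in the strictly-convex case on each $e_i$) characterizing the optimum up to the one remaining degree of freedom killed by feasibility.

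First I would differentiate $V$ along \eqref{eq_sol}. Writing $u_i := a_i \partial_{z_i} h_i$ and using $a_i^2 = 1$, this gives
\begin{align*}
\dot{V} = \sum_i \partial_{z_i} h_i \, \dot{z}_i = \eta \sum_i u_i \sum_{j \in \mc{N}_i} W_{ij}\, g_n\!\bigl(g_l(u_j) - g_l(u_i)\bigr).
\end{align*}
Next I would apply the antisymmetry trick used in the proof of Lemma~\ref{lem_sum}: relabel $i \leftrightarrow j$, invoke $W_{ij} = W_{ji}$ (Assumption~\ref{ass_net}) and the oddness of $g_n$ (Assumption~\ref{ass_nonlin}) to obtain
\begin{align*}
2\dot{V} = -\eta \sum_{i,j} W_{ij} (u_j - u_i)\, g_n\!\bigl(g_l(u_j) - g_l(u_i)\bigr).
\end{align*}
Because $g_l$ is sign-preserving and odd, $g_l(u_j) - g_l(u_i)$ carries the same sign as $u_j - u_i$, and then the sign-preserving property of $g_n$ forces each summand to be non-negative. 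Hence $\dot{V} \leq 0$, so $V$ is non-increasing and the trajectory is bounded on the feasible sublevel set.

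Finally I would invoke LaSalle's invariance principle. The set $\{\dot{V} = 0\}$ consists of states where $g_n(g_l(u_j) - g_l(u_i)) = 0$ on every active edge, which—because $g_n$ and $g_l$ are sign-preserving and odd (so they vanish only at $0$)—collapses to $u_i = u_j$ across each active link. Under Assumption~\ref{ass_net} (uniform/B-connectivity), accumulating the equality constraints over a connectivity window forces $u_1 = \dots = u_n =: \lambda^*$, i.e.\ $\nabla_z H(\mb{z}) \in \mathrm{span}(\mb{1}_N; -\mb{1}_m)$. Combined with the preserved feasibility $\sum_i a_i z_i = b$, Lemma~\ref{lem_z*} identifies this with the unique optimum $\mb{z}^*$ (uniqueness follows from strict convexity of the $e_i$'s and the penalized box terms in \eqref{eq_fsigma}–\eqref{eq_sigma2}).

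The main obstacle is the LaSalle step under \emph{uniform} rather than instantaneous connectivity: at any fixed time the instantaneous graph may be disconnected, so $u_i = u_j$ only propagates across edges that are active at that instant. I would handle this by working with the $\omega$-limit set of the trajectory and showing that invariance together with the B-connectivity of $\bigcup_{\tau \in [t,t+B]} \mc{G}(\tau)$ forces the equality $u_i = u_j$ to spread across the whole node set over one window. A secondary technical point is ensuring $g_l$ is not merely sign-preserving but order-preserving so that $u_j - u_i$ and $g_l(u_j) - g_l(u_i)$ carry the same sign; this is satisfied by all the motivating examples (saturation, log-quantization, signum-type maps) and should either be folded into Assumption~\ref{ass_nonlin} or verified on a case-by-case basis. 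The step-rate $\eta$ plays no role in the continuous-time convergence beyond $\eta > 0$; the bound alluded to in the abstract enters only upon discretization or in the delayed variant developed in Section~\ref{sec_delay}.
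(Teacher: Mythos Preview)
Your proposal is correct and follows essentially the same route as the paper: take the cost gap $V=H(\mb{z})-H(\mb{z}^*)$ as Lyapunov function, differentiate along \eqref{eq_sol}, apply the antisymmetry swap of Lemma~\ref{lem_sum} to get the $-\tfrac{\eta}{2}\sum_{i,j}W_{ij}(u_j-u_i)g_n(g_l(u_j)-g_l(u_i))$ form, and conclude $\dot V\le 0$ via the sign-preserving/odd properties. The paper's proof stops at ``$\dot{\overline{H}}\leq 0$ and the proof follows from Lyapunov theory'' whereas you go further by spelling out the LaSalle step, the uniform-connectivity subtlety, and the (implicit) monotonicity requirement on $g_l$; these are valid refinements rather than a different argument.
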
 
\begin{proof}
	Consider the Lyapunov fucntion  $\overline{H} := \sum_{i=1}^n h_i(z_i)-\sum_{i=1}^n h_i(z^*_i)$ (as the residual cost). Then, having $\dot{\overline{H}} = \nabla_z H^\top \dot{\mb{z}}$ under dynamics \eqref{eq_sol},
   \begin{align} \nonumber
   	\dot{\overline{H}}
   	= \sum_{i =1}^n \partial_{z_i} h_i\eta a_i \sum_{j \in \mc{N}_i} W_{ij} g_n\Big(&g_l(a_j\partial_{z_j} h_j) \\ &-  g_l(a_i\partial_{z_i} h_i)\Big) \label{eq_proof1}
   \end{align}
 Then, following from Lemma~\ref{lem_sum},
   \begin{align} \nonumber
	\dot{\overline{H}}
	= \sum_{i,j =1}^n  \eta a_i  W_{ij} (\partial_{z_i} h_i- \partial_{z_j} h_j) g_n\Big(&g_l(a_j\partial_{z_j} h_j) \\ &-  g_l(a_i\partial_{z_i} h_i)\Big) \label{eq_proof2}
\end{align}
 Under Assumption~\ref{ass_nonlin}, we have $\dot{\overline{H}}\leq 0$ and the proof follows from Lyapunov theory \cite{nonlin}.
\end{proof}

As a special case, one can add sgn-based (or sign-based) nonlinearity to improve the convergence rate in finite-time or fixed-time \cite{shang2017finite,taes2020finite,polyakov2011nonlinear,song2021fixed}. Then, one can choose $g_l(u)$ and/or $g_n(u)$ as $\mbox{sgn}^{\mu_1}(u) + \mbox{sgn}^{\mu_2}(u)$ where  $\mbox{sgn}^\mu(u)=\frac{u^{\mu}}{|u|}$ with  $0<\mu_1<1$ and  $\mu_2>1$. It can be shown that such solutions reach faster convergence which is also verified in the simulation section of this paper. Moreover, such sgn-based schemes motivate robust-to-impulsive-noise solutions \cite{zayyani2016distributed}.

\section{Solution under Time-Delays} \label{sec_delay}
\ab{We consider the discrete-time version of the solution by replacing $\dot{z}_i = \frac{z_i(k+1) -z_i(k)}{T} $ and $\eta_\tau = \eta T$ in~\eqref{eq_sol}}. For this case, we assume no nonlinearity on the nodes, i.e., $g_n(x)=x$, and we get\footnote{\ab{This assumption is to satisfy all-time feasibility of the solution in the presence of time-delays. In case there exist both node-nonlinearity and time-delays it is not easy to satisfy the constraint $\sum_{i=1}^n a_i z_i(t) = b$ and other types of solutions must be considered. }}

\small
\ab{\begin{align} \nonumber
	z_i(k+1) = z_i(k) +\eta_\tau a_i &\sum_{j \in \mc{N}_i}  W_{ij}  \Big(g_l(a_j\partial_{z_j} h_j(k)) \\&-g_l(a_j\partial_{z_i} h_i(k))\Big),
	\nonumber 
\end{align}\normalsize
Next, assume maximum $\overline{\tau}$ steps of time-delays in the data-transmission network among the energy nodes. Then, the solution is in the form  
}

\small
\begin{align} \nonumber
	z_i(k+1) = z_i(k) +\eta_\tau a_i &\sum_{j \in \mc{N}_i}  \sum_{r=0}^{\overline{\tau}} W_{ij}  \Big(g_l(a_j\partial_{z_j} h_j(k-r)) \\&-g_l(a_j\partial_{z_i} h_i(k-r))\Big) \mc{I}_{k-r,ij}(r),
	\label{eq_sol_delay} 
\end{align}\normalsize
with $\mc{I}$ as the indicator function,
\begin{align} \label{eq_I}
	\mc{I}_{k,ij}(\tau) = \left\{ \begin{array}{ll}
		1, & \text{if}~  \tau_{ij}(k) = \tau,\\
		0, & \text{otherwise}.
	\end{array}\right.
\end{align}
\ab{In the summation in Eq.~\eqref{eq_sol_delay}, this indicator function is $1$ if the delay at step $k$ is $\tau$ and $0$ for other values. This simply implies that only the received information from node $j$ with delay $\tau$ contributes in $\sum_{j \in \mc{N}_i} $. }
The assumption on the time-delay is as follows:
\begin{ass} \label{ass_delay} 
The time-delay over link between $i,j$ at step $k$ is $\tau_{ij}(k) \leq \overline{\tau}$. Max delay $\overline{\tau}$  ensures that message of $i$ at step $k$ eventually reaches $j$ (at most) at step $k+\overline{\tau}$. $\tau_{ij}(k)$ is heterogeneous, arbitrary, possibly time-variant, and symmetric over undirected links. 
\end{ass}
Note that, the above assumption is a general form of time-delays considered in the literature, e.g., see \cite{hadjicostis2013average}. Heterogeneity implies that the delays at different links are in general different. Time-variance implies that the delays may change over time but remain bounded by $\overline{\tau}$. 
For static energy units in constant distance from each other, the delays are proportional to the distances \cite{liu2019survey}, which justifies symmetric delays over undirected links.  

To prove convergence first recall the following lemmas.
\begin{lem} \label{lem_strict}
	For a strictly-convex function $H:\mathbb{R}^n \mapsto \mathbb{R}$ with  $2 v <  \frac{d^2 h_i(z_i)}{dz_i^2} < 2 u$, ${\delta(k) := \mb{z}(k+1)-\mb{z}(k)}$, 
	\begin{align} \label{eq_taylor_1}
		H(\mb{z}(k+1)) \geq H(\mb{z}(k)) + \nabla_z H(\mb{z}(k))^\top \delta(k) +  v\delta(k)^\top \delta(k)
		\\
		H(\mb{z}(k+1)) \leq H(\mb{z}(k)) + \nabla_z H(\mb{z}(k))^\top \delta(k) +  u\delta(k)^\top \delta(k) 
		\label{eq_taylor_2}
	\end{align}
\end{lem}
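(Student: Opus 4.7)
The plan is to exploit the separable structure $H(\mb{z}) = \sum_{i=1}^n h_i(z_i)$ inherited from the reformulation~\eqref{eq_dra2} and reduce the multivariate Taylor bound to a sum of one-dimensional bounds. Because $H$ is separable, its Hessian $\nabla_z^2 H$ is diagonal with entries $h_i''(z_i)$, so any quadratic form of the type $\delta^\top \nabla_z^2 H(\cdot)\,\delta$ collapses to $\sum_i h_i''(\cdot)\,\delta_i^2$, and the componentwise sandwich $2v < h_i''(z_i) < 2u$ translates directly into a sandwich on this quadratic form.

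Concretely, the steps I would follow are: first, invoke the second-order mean-value form of Taylor's theorem for each scalar $h_i \in C^2$: there exists $\xi_i$ on the segment joining $z_i(k)$ and $z_i(k+1)$ with
\begin{align} \nonumber
h_i(z_i(k+1)) = h_i(z_i(k)) + h_i'(z_i(k))\,\delta_i(k) + \tfrac{1}{2} h_i''(\xi_i)\,\delta_i(k)^2.
\end{align}
Second, sum this identity over $i=1,\dots,n$ and recognize that $\sum_i h_i'(z_i(k))\,\delta_i(k) = \nabla_z H(\mb{z}(k))^\top \delta(k)$ by separability. Third, apply the hypothesis $2v < h_i''(\xi_i) < 2u$ termwise to bound the remainder,
\begin{align} \nonumber
v\,\delta(k)^\top \delta(k) < \tfrac{1}{2}\sum_{i=1}^n h_i''(\xi_i)\,\delta_i(k)^2 < u\,\delta(k)^\top \delta(k),
\end{align}
which immediately yields the lower bound~\eqref{eq_taylor_1} and the upper bound~\eqref{eq_taylor_2}.

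Since this is essentially the standard strong-convexity/Lipschitz-gradient pair of inequalities (with moduli $2v$ and $2u$), I do not anticipate any serious obstacle. The only subtlety worth flagging is that different coordinates produce different mean-value points $\xi_i$, so one cannot literally write a single matrix $\nabla_z^2 H(\xi)$ at one common point; however, the diagonal structure of the Hessian makes this mismatch harmless, because each coordinate contributes its own scalar remainder and the bounds aggregate cleanly. An equivalent route that sidesteps the mean-value form altogether is Taylor's theorem with integral remainder, pushing the uniform bounds on $h_i''$ inside the integral; both derivations reach the same conclusion.
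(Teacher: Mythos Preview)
Your argument is correct: the separable structure makes the Hessian diagonal, so the scalar Taylor expansion with Lagrange remainder applied coordinate-wise, followed by the termwise sandwich $2v < h_i''(\xi_i) < 2u$, gives exactly \eqref{eq_taylor_1}--\eqref{eq_taylor_2}. Your remark about the mean-value points $\xi_i$ differing across coordinates is well taken and correctly resolved by the diagonal Hessian.

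As for comparison with the paper: the paper does not actually prove this lemma. Its entire proof reads ``The proof of the above lemma is given in general optimization handbooks, e.g., see \cite{Boyd-CVXBook}.'' So you are supplying the standard strong-convexity/Lipschitz-gradient derivation that the paper simply outsources to a reference. There is no methodological difference to discuss beyond the fact that your write-up is self-contained whereas the paper's is a citation.
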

\begin{proof}
	The proof of the above lemma is given in general optimization handbooks, e.g., see \cite{Boyd-CVXBook}.
\end{proof} 
For the network connecting the energy nodes define its Laplacian $L=D-W$ with the diagonal degree matrix $D$ defined as $D = \mbox{diag}[\sum_{j=1}^n W_{ij}]$.
\begin{lem}  \label{lem_xLy}
	For an undirected network, its Laplacian matrix $L$ is positive semi-definite. Let ${\overline{\mb{z}} := \mb{z} - \frac{\mb{1}_n^\top \mb{z}}{n} \mb{1}_n}$, and  $\lambda_n,\lambda_2$ as the largest and smallest non-zero eigenvalue of $L$. Then, 
	\begin{align}    \label{eq_laplace}
		\lambda_2 \|\overline{\mb{z}} \|_2^2 &\leq \mb{z}^\top L \mb{z} = \overline{\mb{z}}^\top L \overline{\mb{z}} \leq \lambda_n \|\overline{\mb{z}} \|_2^2
	\end{align}
Further, given sign-preserving odd nonlinearity $g$ such that $\kappa \leq \frac{g(z_i)}{z_i} \leq \mc{K}$ with $\kappa,\mc{K} \in \mathbb{R}^+$, 
\begin{align} \label{eq_laplace2}
	\lambda_2 \kappa \|\overline{\mb{z}} \|_2^2\leq g(\mb{z})^\top L \mb{z} \leq \lambda_n \mc{K} \|\overline{\mb{z}} \|_2^2 
\end{align}
\end{lem}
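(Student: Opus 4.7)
The argument splits naturally into the classical Laplacian bound \eqref{eq_laplace} and then a symmetrization step that lifts it to the nonlinear form \eqref{eq_laplace2}. My plan is to settle the first layer by spectral decomposition of $L$ on the subspace orthogonal to $\mb{1}_n$, and then to reuse the pairwise identity of Lemma~\ref{lem_sum} to reduce the nonlinear form to a bounded perturbation of the linear one.

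For \eqref{eq_laplace} I would begin from the standard identity $\mb{z}^\top L \mb{z} = \tfrac{1}{2}\sum_{i,j} W_{ij}(z_i - z_j)^2$, which is manifestly nonnegative and hence shows $L \succeq 0$. Since $L\mb{1}_n = \mb{0}$ by construction, the spectrum orders as $0 = \lambda_1 < \lambda_2 \leq \cdots \leq \lambda_n$, the second inequality being strict because uniform connectivity on the undirected graph rules out further zero eigenvalues. Writing $\mb{z} = \overline{\mb{z}} + \bar{z}\,\mb{1}_n$ with $\bar{z} := \mb{1}_n^\top \mb{z}/n$ and using $L\mb{1}_n = \mb{0}$ together with the symmetry of $L$, every cross term vanishes, giving $\mb{z}^\top L \mb{z} = \overline{\mb{z}}^\top L \overline{\mb{z}}$. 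Because $\overline{\mb{z}} \perp \mb{1}_n$ lies in the span of the eigenvectors corresponding to $\lambda_2,\ldots,\lambda_n$, the Rayleigh--Ritz characterization restricted to this subspace delivers the two-sided bound.

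For \eqref{eq_laplace2} I would symmetrize $g(\mb{z})^\top L \mb{z}$ in the spirit of Lemma~\ref{lem_sum}. Writing $(L\mb{z})_i = \sum_j W_{ij}(z_i - z_j)$, swapping the roles of $i$ and $j$ in half of the double sum, and using $W_{ij} = W_{ji}$ would yield
\[
g(\mb{z})^\top L \mb{z} = \tfrac{1}{2}\sum_{i,j} W_{ij}\bigl(g(z_i) - g(z_j)\bigr)(z_i - z_j).
\]
The sector condition $\kappa \leq g(z)/z \leq \mc{K}$ on the odd sign-preserving $g$ would then pin each pairwise product between $\kappa(z_i - z_j)^2$ and $\mc{K}(z_i - z_j)^2$, so summing over $(i,j)$ would give $\kappa\,\mb{z}^\top L \mb{z} \leq g(\mb{z})^\top L \mb{z} \leq \mc{K}\,\mb{z}^\top L \mb{z}$, and chaining with \eqref{eq_laplace} would close the argument.

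The step I expect to be most delicate is the last one: converting the pointwise sector bound on $g(z)/z$ into the incremental pairwise bound on $(g(z_i)-g(z_j))(z_i-z_j)$. A general odd sign-preserving map that satisfies only the pointwise condition need not even be monotone, so the desired incremental sector bound is not automatic. For the nonlinearities that actually matter here (saturation, log-quantization, signum-type maps in the spirit of \eqref{eq_sigma}--\eqref{eq_sigma2}), $g$ is monotone non-decreasing and the incremental bound follows from a short case analysis on the signs of $z_i$ and $z_j$; in a fully rigorous writeup I would either strengthen Assumption~\ref{ass_nonlin} to an incremental sector condition or invoke a mean-value-theorem argument tailored to the admissible class of nonlinearities.
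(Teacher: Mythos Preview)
Your proof mirrors the paper's almost exactly: for \eqref{eq_laplace} the paper simply cites the consensus literature, and for \eqref{eq_laplace2} it performs the same symmetrization $g(\mb{z})^\top L \mb{z} = \tfrac{1}{2}\sum_{i,j} W_{ij}(g(z_i)-g(z_j))(z_i-z_j)$ and then invokes the same incremental bound to sandwich against $\mb{z}^\top L \mb{z}$. The gap you flag in your final paragraph is real and is present in the paper as well: the paper writes ``following the monotonic property of $g$'' to justify $\kappa(z_i-z_j)^2 \leq (g(z_i)-g(z_j))(z_i-z_j) \leq \mc{K}(z_i-z_j)^2$, but monotonicity is not among the stated hypotheses (only the pointwise sector bound on $g(z)/z$ and Assumption~\ref{ass_nonlin} are assumed), so your caution is warranted and your suggested fix---upgrading to an incremental sector condition or restricting to monotone $g$---is exactly what is needed.
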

\begin{proof}
	The proof of \eqref{eq_laplace} is given in standard consensus literature, e.g., see \cite{olfatisaberfaxmurray07}. For the second part, 
	\begin{align} \nonumber
		g(\mb{z})^\top L \mb{z} &= \overline{g(\mb{z})}^\top L \overline{\mb{z}} \\ \label{eq_proof_Ls}
		&=  \frac{1}{2}\sum_{i,j=1}^n W_{ij}(g(z_i)-g(z_j))(z_i-z_j)
	\end{align}
	with symmetric matrix $W$ and ${\overline{g(\mb{z})} := g(\mb{z}) - \frac{\mb{1}_n^\top g(\mb{z})}{n} \mb{1}_n}$. 
	%The first equality above follows from the fact that $\mb{1}^\top_n  L_s = L_s \mb{1}_n = \mb{0}_n$.
	Following the monotonic property of $g(\mb{z})$,
	\begin{align}\nonumber
		\kappa (z_i-z_j)(z_i-z_j) \leq  &(g(z_i)-g(z_j))(z_i-z_j) \\
		&\leq \mc{K}(z_i-z_j)(z_i-z_j)
	\end{align} 
	Using the above in \eqref{eq_proof_Ls} along with \eqref{eq_laplace} proves \eqref{eq_laplace2}.
\end{proof}
Recall that the second largest eigenvalue of the network Laplacian $\lambda_2$ plays a key role in the convergence of the consensus algorithms \cite{olfatisaberfaxmurray07}. It is known that this value is directly related to the network connectivity and thus is referred to as algebraic connectivity \cite{godsil}.
Using the above lemmas, the next theorem proves the convergence. Recall that with some abuse of notation and assuming no latency ($\overline{\tau}=0$), we rewrite the solution~\eqref{eq_sol_delay} as
 \begin{align} 
 	\mb{z}(k+1) = \mb{z}(k) - \eta_\tau \mc{A}  L \varphi \label{eq_L}
 \end{align}
with $\mc{A}:= \mbox{diag}[a_1,\dots,a_N]$ and $\varphi := g_l(\mc{A}\nabla_z H)$.
\begin{theorem} \label{thm_delay}
	Under Assumption~\ref{ass_nonlin},~\ref{ass_net}, and ~\ref{ass_delay}, the solution \eqref{eq_sol_delay} converges to the optimizer of \eqref{eq_dra2} for  
	\begin{align} \label{eq_eta2}
		\eta_\tau <  \frac{\kappa \lambda_2}{u \lambda_n^2 \mc{K}^2(\overline{\tau}+1)} 
	\end{align}
    with $\lambda_n,\lambda_2$ as the largest and smallest non-zero eigenvalue of $L$ and  $\kappa,\mc{K}$ as the sector bounds of $g_l$ (as described in Lemma~\ref{lem_xLy}). 
\end{theorem}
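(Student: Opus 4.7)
The plan is to use the residual cost
\[
\overline{H}(\mb{z}) := H(\mb{z}) - H(\mb{z}^*)
\]
as a Lyapunov function and show it is non-increasing along \eqref{eq_sol_delay}. The starting point is the quadratic upper bound \eqref{eq_taylor_2} from Lemma \ref{lem_strict} applied to $\delta(k) := \mb{z}(k+1) - \mb{z}(k)$. Writing \eqref{eq_sol_delay} in the compact form
\[
\delta(k) = -\eta_\tau \mc{A} \sum_{r=0}^{\overline{\tau}} L_r(k)\, \varphi(k-r),
\]
with $\varphi(k) := g_l(\mc{A}\nabla_z H(\mb{z}(k)))$ and $L_r(k)$ the Laplacian of the sub-network whose active links at step $k$ carry delay exactly $r$ (so that $\sum_r L_r(k) = L$ by Assumption \ref{ass_delay}), I would split $\delta(k)$ into an undelayed direction $-\eta_\tau \mc{A} L \varphi(k)$ plus a delay-error term $\eta_\tau \mc{A} \sum_{r=0}^{\overline{\tau}} L_r(k)[\varphi(k) - \varphi(k-r)]$.

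For the undelayed part, Lemma \ref{lem_xLy} combined with $\mc{A}^2 = I_n$ yields the key descent estimate
\[
\nabla_z H(\mb{z}(k))^\top (-\eta_\tau \mc{A} L \varphi(k)) \leq -\eta_\tau \kappa \lambda_2 \,\|\overline{\mc{A}\nabla_z H(\mb{z}(k))}\|_2^2,
\]
and by Lemma \ref{lem_z*} the quantity $\overline{\mc{A}\nabla_z H(\mb{z})}$ vanishes precisely at $\mb{z}^*$, so it is a bona fide stationarity residual. The delay error and the square term $u\|\delta(k)\|_2^2$ coming from Lemma \ref{lem_strict} then have to be bounded in matching units. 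I would telescope $\varphi(k)-\varphi(k-r) = \sum_{s=1}^{r}[\varphi(k-s+1)-\varphi(k-s)]$ and use the sector bound $\mc{K}$ on $g_l$ together with the Lipschitz constant $2u$ of $\nabla_z H$ (guaranteed by Lemma \ref{lem_strict}) to get $\|\varphi(k)-\varphi(k-r)\|_2 \leq 2u\mc{K}\sum_{s=1}^{r}\|\delta(k-s)\|_2$; applying $\|L_r(k) v\|_2\leq \lambda_n\|v\|_2$ and Cauchy--Schwarz then turns both the delay error and $u\|\delta(k)\|_2^2$ into weighted sums of $\|\delta(k-s)\|_2^2$ over $s\in\{0,\dots,\overline{\tau}\}$ with coefficients of order $\eta_\tau^2 u \lambda_n^2 \mc{K}^2 (\overline{\tau}+1)$.

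To close the argument I would augment the Lyapunov function by a sliding-window term $V(k) = \overline{H}(\mb{z}(k)) + \sum_{s=1}^{\overline{\tau}} \alpha_s \|\delta(k-s)\|_2^2$, and choose the weights $\alpha_s$ so that the window telescopes and the net coefficient of $\|\delta(k)\|_2^2$ ends up of order $\eta_\tau\big(\eta_\tau u \lambda_n^2 \mc{K}^2 (\overline{\tau}+1) - \kappa\lambda_2\big)$. Forcing this to be non-positive recovers exactly the step-size bound \eqref{eq_eta2}. Then $V(k)$ is non-increasing and bounded below, so the cumulative descent is finite, forcing both $\|\delta(k)\|_2$ and $\|\overline{\mc{A}\nabla_z H(\mb{z}(k))}\|_2$ to zero; combined with Lemma \ref{lem_z*} and feasibility invariance (which is inherited from the delay-free proof, since under Assumption \ref{ass_delay} the delays are symmetric over undirected links), this pins $\mb{z}(k)$ to the unique optimizer $\mb{z}^*$.

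The main obstacle is the delayed argument inside $g_l$: the contraction of Lemma \ref{lem_xLy} cannot be invoked directly because the gradients driving the update are evaluated at stale iterates. The remedy is the telescoping bound above, and the non-obvious step is tuning the sliding-window weights $\alpha_s$ so that the residual constant balances cleanly against $\kappa\lambda_2$ --- this balancing is precisely what produces the denominator $u\lambda_n^2 \mc{K}^2(\overline{\tau}+1)$ in \eqref{eq_eta2}.
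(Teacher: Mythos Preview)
Your proposal is sound but takes a substantially more elaborate route than the paper. The paper does \emph{not} split $\delta(k)$ into an undelayed direction plus a stale-gradient error, does not telescope $\varphi(k)-\varphi(k-r)$, and does not augment $\overline{H}$ with a sliding-window term. Instead, it first establishes the delay-free bound $\eta_\tau<\overline{\eta}:=\kappa\lambda_2/(u\lambda_n^2\mc{K}^2)$ by exactly the descent inequality you wrote (Lemma~\ref{lem_strict} plus Lemma~\ref{lem_xLy} applied to $-\eta_\tau\mc{A}\nabla_zH^\top L\varphi + u\eta_\tau^2\varphi^\top L^\top L\varphi\le 0$), and then handles delays by a single counting observation: from the indicator definition, $0<\sum_{r=0}^{\overline{\tau}}\mc{I}_{k-r,ij}(r)\le \overline{\tau}+1$, so at most $\overline{\tau}+1$ delayed messages can pile up on each link at step $k$, which the paper reads as ``$\delta(k)$ is scaled by $\overline{\tau}+1$'' and hence $\eta_\tau$ must be down-scaled by the same factor. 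That one line is the entire delay argument.

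Your Lyapunov--Krasovskii construction is the more rigorous of the two: it actually tracks the effect of stale gradients rather than absorbing them into a heuristic scaling, and it would transfer to settings (e.g.\ non-symmetric or random delays) where the paper's shortcut is harder to justify. The price is the bookkeeping of the window weights $\alpha_s$. One technical caveat in your write-up: the identity $\sum_{r=0}^{\overline{\tau}}L_r(k)=L$ need not hold under time-varying delays, since several messages on the same link (sent at different past times) may arrive simultaneously at step $k$; what Assumption~\ref{ass_delay} actually gives is $0\preceq \sum_r L_r(k)\preceq (\overline{\tau}+1)L$, which is enough for your bounds but means the ``undelayed direction'' you peel off is not exactly $-\eta_\tau\mc{A}L\varphi(k)$. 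Adjusting for this only affects constants and does not change the final bound \eqref{eq_eta2}.
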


\begin{proof}
	We first prove the bound on $\eta_\tau$ for discrete-time convergence in the absence of delay. Then extend the results to the presence of time-delay.
	Consider discrete Lyapunov-type residual function $\overline{H}(k) := H(\mb{z}(k))-H(\mb{z}^*)$. We prove $\overline{H}(k+1) < \overline{H}(k)$ under dynamics~\eqref{eq_sol_delay}  for $\mb{z}(k)  \neq \mb{z}^*$. Define ${\delta(k) := \mb{z}(k+1)-\mb{z}(k)}$.
	To satisfy $\overline{H}(k+1) \leq \overline{H}(k)$, from Lemma~\ref{lem_strict},
	\begin{align} \label{eq_proof_1}
		\nabla_z H^\top \delta(k)  + u \delta(k)^\top \delta(k)  \leq 0.
	\end{align} 
	Under  \eqref{eq_sol_delay}, 
	\begin{align} \label{eq_proof_2}
		-\eta_\tau \mc{A}  \nabla_z H^\top     L\varphi  + u \mc{A}^2 \eta_\tau^2     \varphi^\top L^\top L\varphi \leq 0;
	\end{align} 
	Define $\xi(\mb{z}) := \mc{A}\nabla_z H -  \frac{\mc{A}}{n}\sum_{i=1}^n \frac{ dh_i(z_i)}{dz_i} \mb{1}_n$. Note that $\mc{A}^2=I_n$. From Assumption~\ref{ass_nonlin} and Lemma~\ref{lem_xLy}, the above is satisfied for,
	\begin{align} 
		% &-\kappa \eta \lambda_2    \xi^\top    \xi + u \mc{K}^2 \eta^2     \xi^\top L_s^\top L_s   \xi  \leq 0 % \\ 
		\label{eq_proof__rho}
		(-\kappa \eta \lambda_2 + u \lambda_n^2 \mc{K}^2 \eta^2)    \xi^\top    \xi \leq 0,
	\end{align}
	with the strict inequality for
	\begin{align} \label{eq_eta}
		\eta_\tau <  \frac{\kappa \lambda_2}{u \lambda_n^2 \mc{K}^2} =: \overline{\eta}
	\end{align}
   The above implies that for $\eta_\tau < \overline{\eta}$ the residual function is decreasing \ab{and state variable $\mb{z}$ converges to the equilibrium $\mb{z}^*$ and $H(\mb{z})$ converges to the optimal cost $H(\mb{z}^*)$.} Next, we prove convergence under latency. 	   
   For general heterogeneous time-varying delays, $\delta(k)$ needs to be scaled by $\overline{\tau}+1$ and, thus, $\eta_\tau$ needs to be down-scaled by $\overline{\tau}+1$ to ensure convergence. 
   %Consider two cases: (i) for time-invariant delays $\sum_{r=0}^{\overline{\tau}} \mc{I}_{k-r,ij}(r)=1$ in \eqref{eq_sol_delay}, implying delivery of \textit{only one} delayed packet over every link from $j \in \mc{N}_i$. For this case the same bound on $\eta_\tau$ ensures convergence.
   %(ii) 
   For time-varying delays, from~\eqref{eq_I},
   $0 <  \sum_{r=0}^{\overline{\tau}} \mc{I}_{k-r,ij}(r) \leq (\overline{\tau}+1)$.
   This implies that $\delta(k)$ is scaled by $\overline{\tau}+1$ and following the same procedure as in the proof, $\eta_\tau$ needs to be down-scaled by $\overline{\tau}+1$, i.e., $ \eta_\tau (\overline{\tau}+1) < \overline{\eta}$.
   Therefore, in general, $\eta_\tau$ in the following ensures convergence in the presence of latency,
   \begin{align} \nonumber
	\eta_\tau <  \frac{\kappa \lambda_2}{u \lambda_n^2 \mc{K}^2(\overline{\tau}+1)} = \frac{\overline{\eta}}{\overline{\tau}+1}
   \end{align}   
    This proves the theorem.
\end{proof}
\ab{From Eq.~\eqref{eq_eta2} one can decrease the step-rate $\eta_\tau$ to handle larger max delay $\overline{\tau}$.} Note that the bound on the step rate $\eta_\tau$ is a function of $\lambda_2,\lambda_n$ which are in turn a function of network properties (e.g., its connectivity and size). Also, the sector-bounds $\kappa,\mc{K}$ of the nonlinear mapping affect the bound on $\eta_\tau$. In case of homogeneous delays $\tau_{ij}=\overline{\tau}$ at all energy nodes, i.e., the nodes' states at any time-step $k$ get updated based on the information at $k-\overline{\tau}-1$. Similarly, for this case assuming a longer time-scale of $\overline{\tau}+1$ steps ensures the convergence, i.e., by down-scaling $\eta_\tau$ by $\overline{\tau}+1$. Thus, the same bound \eqref{eq_eta2} ensures convergence. 
The algorithm for energy resource management under communication time-delay is summarized in Algorithm~\ref{alg_1}, assuming that the given $\eta_\tau$ satisfies \eqref{eq_eta2} for the given max $\overline{\tau}$.

\begin{algorithm} 
	%	\KwResult{Final state $\mb{x}(k)$ and cost $F(\mb{x}(k))$}
	\textbf{Input:}  $\mc{N}_i$, $W$, $\eta_\tau$, $\overline{d}$, $\underline{d}$, $b$, $h_i(\cdot)$, $\overline{\tau}$\;
	\textbf{Initialization:} $k=0$, energy node $i$ chooses a feasible initial state\;
	\While{termination criteria NOT true}{
		Node $i$ receives a (possibly delayed) packet including $\partial_{z_j} h_j$ from nodes in $j \in \mc{N}_i$\;
		Node $i$ computes Eq.~\eqref{eq_sol_delay}\;
		Node $i$ shares the updated $\partial_{z_i} h_i$ with neighboring nodes $i \in \mc{N}_j$\;
		$k \leftarrow k+1$\;
	}
	\textbf{Return} Final state $z_i$ and cost $h_i(z_i)$\;	
	\caption{Energy Resource Management under Communication Time-Delay} 
	\label{alg_1}
\end{algorithm}

\section{Simulations} \label{sec_sim}
For simulation we consider the quadratic cost model as in \cite{kar_edp}. The cost $f_i$ and $e_i$ are defined in the form
\begin{align} \label{eq_f_quad}
 \gamma_i x_i^2 + \beta_i x_i + \alpha_i 
\end{align}
We consider different ranges of parameters $\alpha_i,\beta_i,\gamma_i$ in this section. First, for the sake of comparison, we consider similar parameters as in \cite{kar_edp}. The feasibility constraint is $b=700 MW$, i.e., 
\begin{align} \label{eq_b_sim}
	\sum_{i=1}^N x_i  = 700 + \sum_{i=1}^m y_i,
\end{align}

First, consider a cycle network of $N=n=10$ all generator power units to distributedly optimize their costs \eqref{eq_f_quad} under the feasibility constraint \eqref{eq_b_sim} and box constraints \eqref{eq_box} with $\underline{d}=20$  and $\overline{d} =90$ for generators. Solution under different linear and nonlinear scenarios are compared in Fig.~\ref{fig_sim1}. For the nonlinearity, we chose the saturation function and the sgn-based function. The accelerated linear solution is from the work \cite{shames2011accelerated} with parameter $\overline{\beta} = 0.5$. For the sgn-based solution, the evolution of power states is shown along with their weighted average $\frac{1}{n}\sum_{i=1}^n {a_i z_i} = \frac{b}{n}$  to check the feasibility. For the simulation, the step rate is $\eta =1$. 
 
\begin{figure}[]
	\centering
	\includegraphics[width=2.5in]{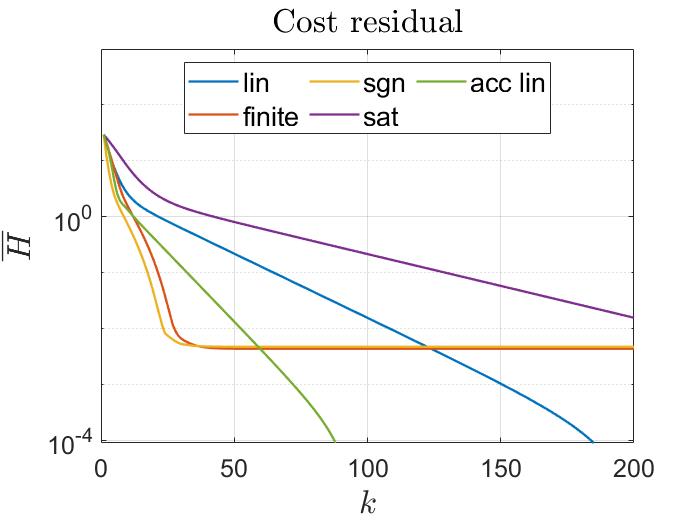}
	\includegraphics[width=2.5in]{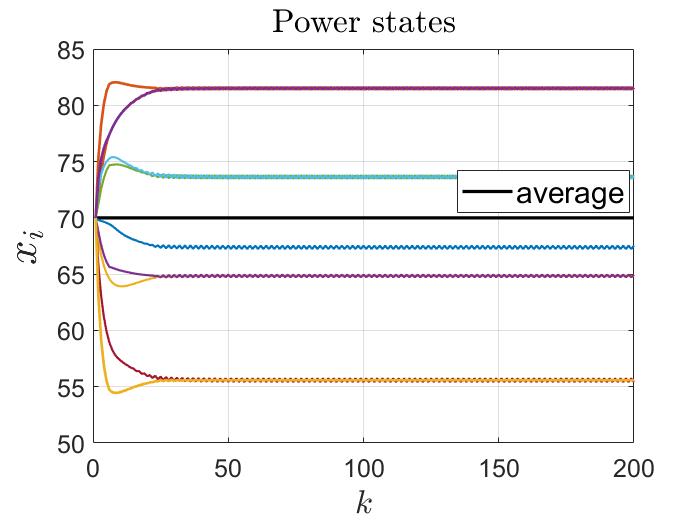}
	\caption{(Top) The evolution of the cost residuals is compared under different linear and nonlinear solutions. The sgn-based solution converges faster than others. (Bottom) The generated power states under a nonlinear sgn-based solution are shown. The average (black line in the middle) remains constant implying that the resource-demand feasibility holds. 
	} 
	\label{fig_sim1}
\end{figure}

Next, we consider the case that some of the energy units reserve energy (battery) and some produce energy (generator). Consider a 2-hop cycle network of $n=10$ energy nodes with $N=7$ generators and $m=3$ battery nodes. The cost model for the batteries is linear in the form 
\ab{
\begin{align} \label{eq_f_lin}
	\beta_i y_i + \alpha_i 
\end{align}}
The box constraints are $\underline{d}=20$ for generators, $\underline{d}=0$ for batteries,  and upper limit $\overline{d} =200$ for all. Under initialization $x_i(0)=100MW$ and $y_i(0)=0MW$, the power states under sgn-based solution with $\mu_1=0.5$ and $\mu_2=1.1$ are illustrated in Fig.~\ref{fig_sim2}. The cost residuals under linear and different sgn-based solutions are compared in Fig.~\ref{fig_sim2}. As it is clear from the figure, by adding sgn-based nonlinearity in the solution the convergence rate is improved as compared with the linear solution.
\begin{figure}[]
	\centering
	\includegraphics[width=2.5in]{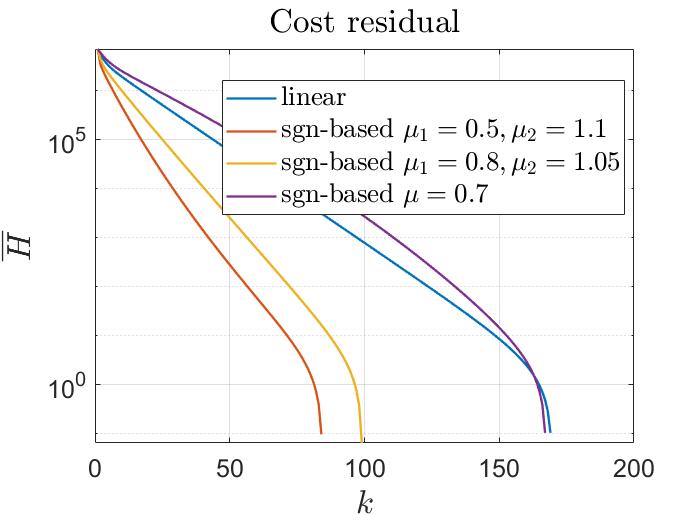}
	\includegraphics[width=2.5in]{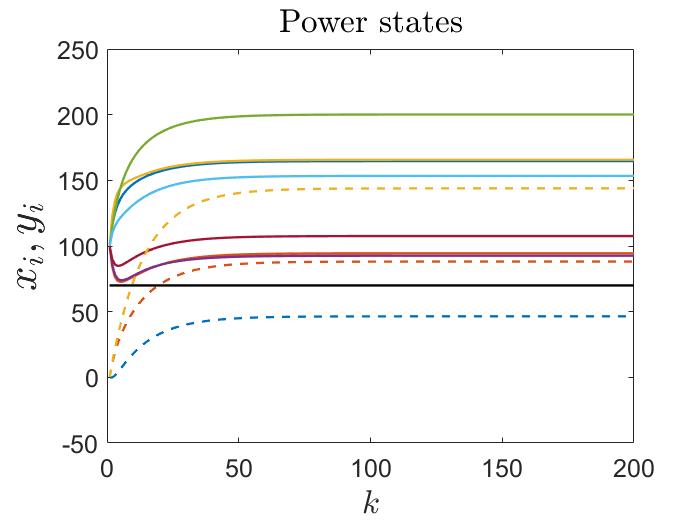}
	\caption{(Top) Cost residuals for generators plus batteries converge under different linear and nonlinear solutions. (Bottom) The generated power states (solid lines) and reserved powers (dashed lines) under the nonlinear sgn-based solution reach steady-state value while the weighted average (black solid line) remains constant implying resource-demand feasibility. 
	} 
	\label{fig_sim2}
\end{figure}

Next, we simulate the solution under communication latency (over a 2-hop cycle), i.e., the information sent over the network reaches the destination with some bounded time-delay. The delays at the links are different (implying heterogeneity) but, in terms of time-dependency, we consider two cases: time-varying and time-invariant communication delays. For the first case, we randomly generate an integer number between $0$ and $\overline{\tau}$ denoting the delay at every iteration, and for the second case the randomly-generated delay remains constant over the time-evolution of the solution dynamics. The solutions for different max values of the time-delays $\overline{\tau}$ are shown in Fig.~\ref{fig_sim3}. The other cost and state parameters are chosen randomly as in the previous simulations. It is clear that for the given step-rate by increasing the max delay $\overline{\tau} \geq 5$ (time-varying delays) and $\overline{\tau} \geq 2$ (time-invariant delays) the solution loses stability and convergence. Recall that, from the proof of Theorem~\ref{thm_delay}, a remedy to tolerate higher time-delays (i.e., for larger $\overline{\tau}$) is to reduce the discrete-time step-rate $\eta_\tau$. \ab{Next we decrease step-rate $\eta_\tau$ to half of its previous value to tolerate larger time-delays. For the new step-rate the residual simulations are re-produced and for this case, as it is evident from Fig.~\ref{fig_sim4}, the solution can \textit{tolerate} the max delay $\overline{\tau} \leq 5$ for time-varying delays and $\overline{\tau} \leq 3$ for time-invariant delays. We further simulate the evolution of the power states for time-varying delays with $\overline{\tau} = 5$ for large and small step rates. Fig.~\ref{fig_sim5} shows that how reducing the step rate works as a remedy to tackle high time-delays. It is worth mentioning that although the solution is unstable for large step-rates and  large time-delay, it still preserves its all-time feasibility (the average remains constant while the states are oscillating).}

\begin{figure}[]
	\centering
	\includegraphics[width=2.5in]{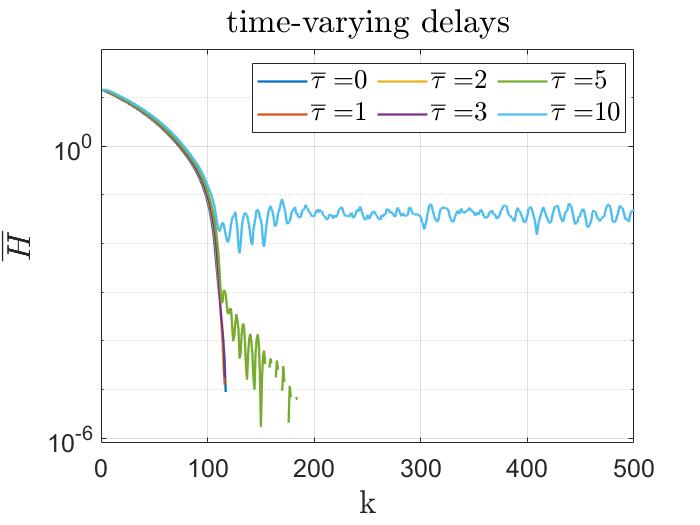}
	\includegraphics[width=2.5in]{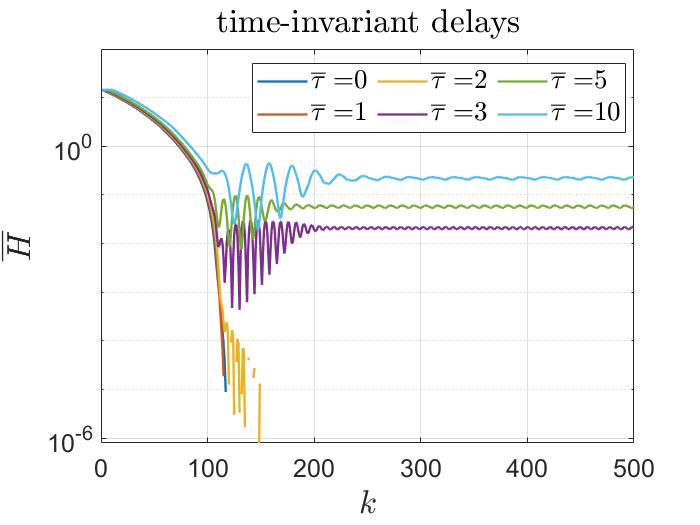}
	\caption{Cost residuals under different time-delayed scenarios: (Top) time-varying heterogeneous delays (Bottom) time-invariant heterogeneous delays. 
	} 
	\label{fig_sim3}
\end{figure}

\begin{figure}[]
	\centering
	\includegraphics[width=2.5in]{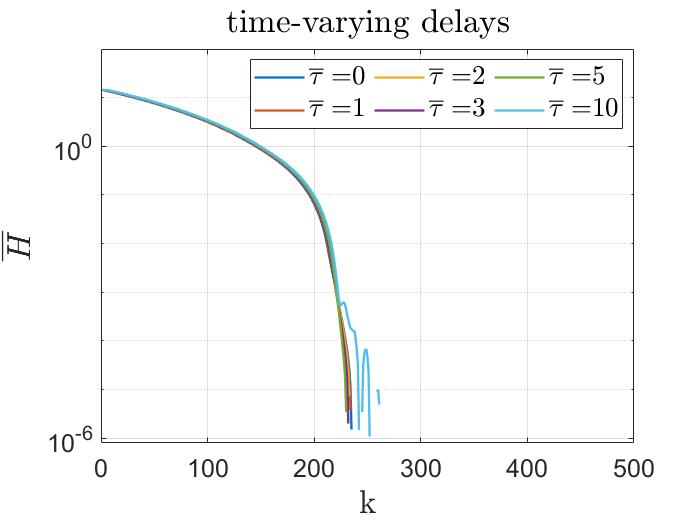}
	\includegraphics[width=2.5in]{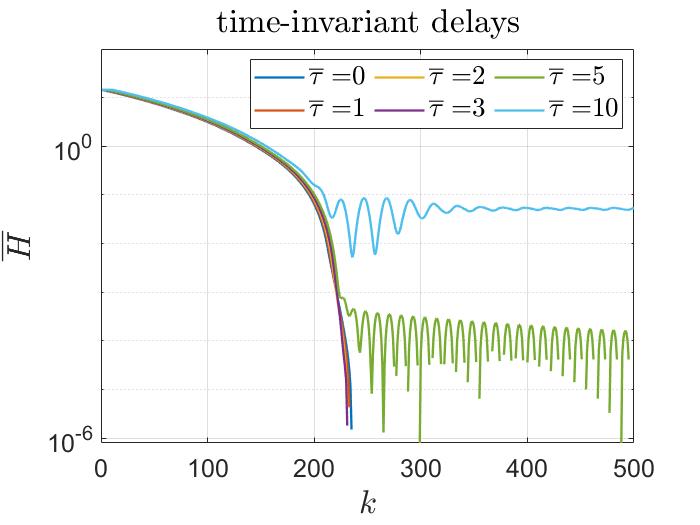}
	\caption{Time-delayed cost residuals under reduced step-rate: (Top) time-varying heterogeneous delays (Bottom) time-invariant heterogeneous delays. 
	} 
	\label{fig_sim4}
\end{figure}

\begin{figure}[]
	\centering
	\includegraphics[width=2.5in]{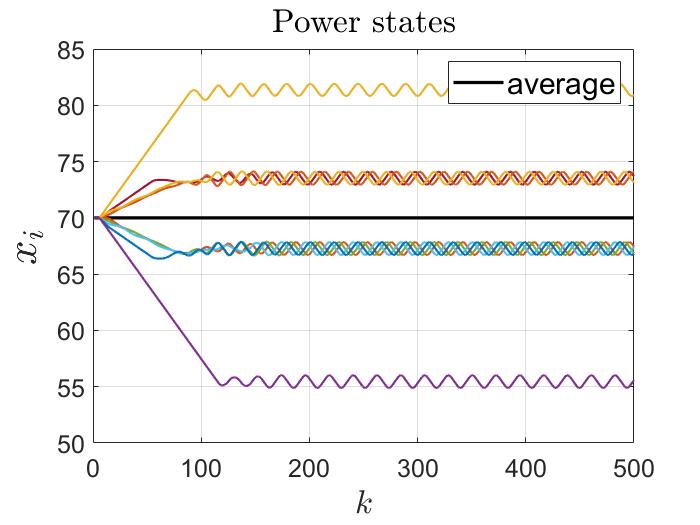}
	\includegraphics[width=2.5in]{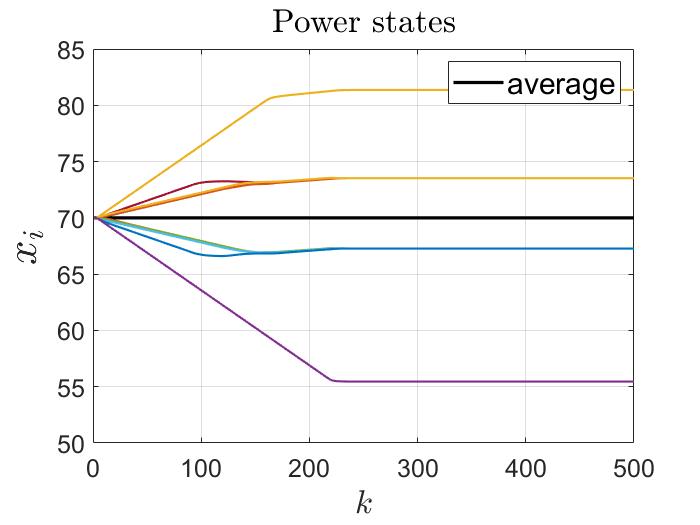}
	\caption{\ab{The figure shows the evolution of states under time-varying delays with $\overline{\tau} = 5$ for $\eta_\tau =0.5$ (top) and $\eta_\tau =0.25$ (bottom). Reducing the step rate enables the algorithm to tolerate higher time-delay values.} 
	} 
	\label{fig_sim5}
\end{figure}

\section{Conclusions and Future Directions} \label{sec_con}
\subsection{Concluding Remarks}
This work studies distributed algorithms for energy resource-management over a network of energy nodes, both generating-type and reserving-type nodes. The solution is proposed both in continuous-time, and discrete-time under communication time-delay. The proposed Laplacian-gradient-based solution addresses all-time resource-demand feasibility, possible time-delay in the communication network of energy nodes, and nonlinear dynamics due to possible sgn-based or other inherent model nonlinearities. 

\subsection{Future Research}
As one direction of future research, one can consider the communication network under packet drop or link failure \cite{icrom}. In other words, the idea is to check under what conditions over an unreliable communication network the feasibility and convergence hold. Addressing security concerns with privacy-preserving algorithms is another direction of future research. In this case one can use secure consensus algorithms \cite{mo2016privacy} for message passing over the communication network.

	\bibliographystyle{IEEEbib}
	\bibliography{bibliography}

\begin{thebibliography}{10}

\bibitem{Alanne2006distributed}
K.~Alanne and A.~Saari,
\newblock ``Distributed energy generation and sustainable development,''
\newblock {\em Renewable and sustainable energy reviews}, vol. 10, no. 6, pp.
  539--558, 2006.

\bibitem{strezoski2023distributed}
L.~Strezoski,
\newblock ``Distributed energy resource management systems--{DERMS}: State of
  the art and how to move forward,''
\newblock {\em Wiley Interdisciplinary Reviews: Energy and Environment}, vol.
  12, no. 1, pp. e460, 2023.

\bibitem{MSC09}
M.~Doostmohammadian, H.~Sayyaadi, and M.~Moarref,
\newblock ``A novel consensus protocol using facility location algorithms,''
\newblock in {\em IEEE Conference on Control Applications \& Intelligent
  Control}, 2009, pp. 914--919.

\bibitem{rikos2021optimal}
A.~I. Rikos, A.~Grammenos, E.~Kalyvianaki, C.~N Hadjicostis, T.~Charalambous,
  and K.~H. Johansson,
\newblock ``Optimal {CPU} scheduling in data centers via a finite-time
  distributed quantized coordination mechanism,''
\newblock in {\em 60th IEEE Conference on Decision and Control (CDC)}. IEEE,
  2021, pp. 6276--6281.

\bibitem{grammenos2023cpu}
A.~Grammenos, T.~Charalambous, and E.~Kalyvianaki,
\newblock ``Cpu scheduling in data centers using asynchronous finite-time
  distributed coordination mechanisms,''
\newblock {\em IEEE Transactions on Network Science and Engineering}, 2023.

\bibitem{ccta_cpu}
M.~Doostmohammadian, A.~Aghasi, A.~I. Rikos, A.~Grammenos, E.~Kalyvianaki,
  C.~N. Hadjicostis, K.~H. Johansson, and T.~Charalambous,
\newblock ``Distributed cpu scheduling subject to nonlinear constraints,''
\newblock in {\em IEEE Conference on Control Technology and Applications
  (CCTA)}. IEEE, 2022, pp. 746--751.

\bibitem{kar_edp}
S.~Kar, G.~Hug, J.~Mohammadi, and J.~M.~F. Moura,
\newblock ``Distributed state estimation and energy management in smart grids:
  A consensus ${+}$ innovations approach,''
\newblock {\em IEEE Journal of Selected Topics in Signal Processing}, vol. 8,
  no. 6, pp. 1022--1038, 2014.

\bibitem{cherukuri2015tcns}
A.~Cherukuri and J.~Cortes,
\newblock ``Distributed generator coordination for initialization and anytime
  optimization in economic dispatch,''
\newblock {\em IEEE Trans. on Control of Network Systems}, vol. 2, no. 3, pp.
  226--237, 2015.

\bibitem{4497237}
A.~N. Venkat, I.~A. Hiskens, J.~B. Rawlings, and S.~J. Wright,
\newblock ``Distributed mpc strategies with application to power system
  automatic generation control,''
\newblock {\em IEEE Transactions on Control Systems Technology}, vol. 16, no.
  6, pp. 1192--1206, 2008.

\bibitem{7042749}
M.~Mahmoodi, P.~Shamsi, and B.~Fahimi,
\newblock ``Economic dispatch of a hybrid microgrid with distributed energy
  storage,''
\newblock {\em IEEE Transactions on Smart Grid}, vol. 6, no. 6, pp. 2607--2614,
  2015.

\bibitem{boyd2006optimal}
L.~Xiao and S.~Boyd,
\newblock ``Optimal scaling of a gradient method for distributed resource
  allocation,''
\newblock {\em Journal of Optimization Theory and Applications}, vol. 129, no.
  3, pp. 469--488, 2006.

\bibitem{doan2017scl}
T.~T. Doan and A.~Olshevsky,
\newblock ``Distributed resource allocation on dynamic networks in quadratic
  time,''
\newblock {\em Systems \& Control Letters}, vol. 99, pp. 57--63, 2017.

\bibitem{shang2017finite}
Y.~Shang,
\newblock ``Finite-time cluster average consensus for networks via distributed
  iterations,''
\newblock {\em International Journal of Control, Automation and Systems}, vol.
  15, no. 2, pp. 933--938, 2017.

\bibitem{taes2020finite}
M.~Doostmohammadian,
\newblock ``Single-bit consensus with finite-time convergence: Theory and
  applications,''
\newblock {\em IEEE Transactions on Aerospace and Electronic Systems}, vol. 56,
  no. 4, pp. 3332--3338, 2020.

\bibitem{polyakov2011nonlinear}
A.~Polyakov,
\newblock ``Nonlinear feedback design for fixed-time stabilization of linear
  control systems,''
\newblock {\em IEEE Transactions on Automatic Control}, vol. 57, no. 8, pp.
  2106--2110, 2011.

\bibitem{song2021fixed}
Y.~Song, J.~Cao, and L.~Rutkowski,
\newblock ``A fixed-time distributed optimization algorithm based on
  event-triggered strategy,''
\newblock {\em IEEE Transactions on Network Science and Engineering}, vol. 9,
  no. 3, pp. 1154--1162, 2021.

\bibitem{wei_me}
W.~Jiang, M.~Doostmohammadian, and T.~Charalambous,
\newblock ``Distributed resource allocation via {ADMM} over digraphs,''
\newblock in {\em IEEE 61st Conference on Decision and Control (CDC)}, 2022,
  pp. 5645--5651.

\bibitem{banjac2019decentralized}
G.~Banjac, F.~Rey, P.~Goulart, and J.~Lygeros,
\newblock ``Decentralized resource allocation via dual consensus {ADMM},''
\newblock in {\em American Control Conference (ACC)}. IEEE, 2019, pp.
  2789--2794.

\bibitem{falsone2020tracking}
A.~Falsone, I.~Notarnicola, G.~Notarstefano, and M.~Prandini,
\newblock ``Tracking-admm for distributed constraint-coupled optimization,''
\newblock {\em Automatica}, vol. 117, pp. 108962, 2020.

\bibitem{carli2019distributed}
R.~Carli and M.~Dotoli,
\newblock ``Distributed alternating direction method of multipliers for
  linearly constrained optimization over a network,''
\newblock {\em IEEE Control Systems Letters}, vol. 4, no. 1, pp. 247--252,
  2019.

\bibitem{cdc22}
M.~Doostmohammadian, W.~Jiang, and T.~Charalambous,
\newblock ``{DTAC-ADMM}: Delay-tolerant augmented consensus {ADMM}-based
  algorithm for distributed resource allocation,''
\newblock in {\em IEEE 61st Conference on Decision and Control (CDC)}. IEEE,
  2022, pp. 308--315.

\bibitem{doan2017ccta}
T.~T. Doan and C.~L. Beck,
\newblock ``Distributed lagrangian methods for network resource allocation,''
\newblock in {\em IEEE Conference on Control Technology and Applications
  (CCTA)}, 2017, pp. 650--655.

\bibitem{lcss21}
M.~Doostmohammadian, U.~A. Khan, M.~Pirani, and T.~Charalambous,
\newblock ``Consensus-based distributed estimation in the presence of
  heterogeneous, time-invariant delays,''
\newblock {\em IEEE Control Systems Letters}, vol. 6, pp. 1598--1603, 2021.

\bibitem{hadjicostis2013average}
C.~N. Hadjicostis and T.~Charalambous,
\newblock ``Average consensus in the presence of delays in directed graph
  topologies,''
\newblock {\em IEEE Transactions on Automatic Control}, vol. 59, no. 3, pp.
  763--768, 2013.

\bibitem{ballotta2023can}
L.~Ballotta, M.~R. Jovanovi{\'c}, and L.~Schenato,
\newblock ``Can decentralized control outperform centralized? the role of
  communication latency,''
\newblock {\em IEEE Transactions on Control of Network Systems}, 2023.

\bibitem{SensNets:Olfati04}
R.~Olfati-Saber and R.~M. Murray,
\newblock ``Consensus problems in networks of agents with switching topology
  and time-delays,''
\newblock {\em IEEE Transactions on Automatic Control}, vol. 49, no. 9, pp.
  1520--1533, Sept. 2004.

\bibitem{aryankia2020neuro}
K.~Aryankia and R.~R. Selmic,
\newblock ``Neuro-adaptive formation control and target tracking for nonlinear
  multi-agent systems with time-delay,''
\newblock {\em IEEE Control Systems Letters}, vol. 5, no. 3, pp. 791--796,
  2020.

\bibitem{li2019bandit}
B.~Li, T.~Chen, and G.~B. Giannakis,
\newblock ``Bandit online learning with unknown delays,''
\newblock in {\em 22nd International Conference on Artificial Intelligence and
  Statistics}. PMLR, 2019, pp. 993--1002.

\bibitem{gatsis2012asynchronous}
N.~Gatsis and G.~B. Giannakis,
\newblock ``Asynchronous subgradient methods with unbounded delays for
  communication networks,''
\newblock in {\em IEEE 51st IEEE Conference on Decision and Control (CDC)}.
  IEEE, 2012, pp. 5870--5875.

\bibitem{marques2010stochastic}
A.~Marques, G.~B. Giannakis, and J.~Ramos,
\newblock ``Stochastic cross-layer resource allocation for wireless networks
  using orthogonal access: Optimality and delay analysis,''
\newblock in {\em IEEE International Conference on Acoustics, Speech and Signal
  Processing}. IEEE, 2010, pp. 3154--3157.

\bibitem{disfani2015distributed}
V.~Disfani, L.~Fan, and Z.~Miao,
\newblock ``Distributed {DC} optimal power flow for radial networks through
  partial primal dual algorithm,''
\newblock in {\em 2015 IEEE Power \& Energy Society General Meeting}. IEEE,
  2015, pp. 1--5.

\bibitem{kargarian2016toward}
A.~Kargarian, J.~Mohammadi, J.~Guo, S.~Chakrabarti, M.~Barati, G.~Hug, S.~Kar,
  and R.~Baldick,
\newblock ``Toward distributed/decentralized {DC} optimal power flow
  implementation in future electric power systems,''
\newblock {\em IEEE Transactions on Smart Grid}, vol. 9, no. 4, pp. 2574--2594,
  2016.

\bibitem{xu2018distributed}
Y.~Xu, H.~Sun, H.~Liu, and Q.~Fu,
\newblock ``Distributed solution to {DC} optimal power flow with congestion
  management,''
\newblock {\em International Journal of Electrical Power \& Energy Systems},
  vol. 95, pp. 73--82, 2018.

\bibitem{biagioni2020learning}
D.~Biagioni, P.~Graf, X.~Zhang, A.~S. Zamzam, K.~Baker, and J.~King,
\newblock ``Learning-accelerated {ADMM} for distributed {DC} optimal power
  flow,''
\newblock {\em IEEE Control Systems Letters}, vol. 6, pp. 1--6, 2020.

\bibitem{Boyd-CVXBook}
S.~Boyd and L.~Vandenberghe,
\newblock {\em Convex Optimization},
\newblock Cambridge University Press, New York, NY, USA, 2004.

\bibitem{bertsekas_lecture}
D.~P. Bertsekas, A.~Nedic, and A.~E. Ozdaglar,
\newblock ``Convexity, duality, and {L}agrange multipliers,''
\newblock {\em Lecture Notes, MIT Press}, 2001.

\bibitem{nesterov1998introductory}
Y.~Nesterov,
\newblock ``Introductory lectures on convex programming, {volume I}: Basic
  course,''
\newblock {\em Lecture notes}, vol. 3, no. 4, pp. 5, 1998.

\bibitem{nonlin}
J.J. Slotine and W.~Li,
\newblock {\em Applied nonlinear control},
\newblock Prentice-Hall, 1991.

\bibitem{zayyani2016distributed}
H.~Zayyani, M.~Korki, and F.~Marvasti,
\newblock ``A distributed 1-bit compressed sensing algorithm robust to
  impulsive noise,''
\newblock {\em IEEE Communications Letters}, vol. 20, no. 6, pp. 1132--1135,
  2016.

\bibitem{liu2019survey}
K.~Liu, A.~Selivanov, and E.~Fridman,
\newblock ``Survey on time-delay approach to networked control,''
\newblock {\em Annual Reviews in Control}, vol. 48, pp. 57--79, 2019.

\bibitem{olfatisaberfaxmurray07}
R.~Olfati-Saber, J.~A. Fax, and R.~M. Murray,
\newblock ``Consensus and cooperation in networked multi-agent systems,''
\newblock {\em IEEE Proceedings}, vol. 95, no. 1, pp. 215--233, January 2007.

\bibitem{godsil}
C.~Godsil and G.~Royle,
\newblock {\em Algebraic graph theory},
\newblock New York: Springer, 2001.

\bibitem{shames2011accelerated}
E.~Ghadimi, M.~Johansson, and I.~Shames,
\newblock ``Accelerated gradient methods for networked optimization,''
\newblock in {\em IEEE American Control Conference}, 2011, pp. 1668--1673.

\bibitem{icrom}
M.~Doostmohammadian, U.~A. Khan, and A.~Aghasi,
\newblock ``Distributed constraint-coupled optimization over unreliable
  networks,''
\newblock in {\em IEEE/RSI International Conference on Robotics and
  Mechatronics (ICRoM)}, Tehran, Iran, 2022, pp. 371--376, IEEE.

\bibitem{mo2016privacy}
Y.~Mo and R.~M. Murray,
\newblock ``Privacy preserving average consensus,''
\newblock {\em IEEE Transactions on Automatic Control}, vol. 62, no. 2, pp.
  753--765, 2016.

\end{thebibliography}
\end{document}